\documentclass[conference]{IEEEtran}
\usepackage{amsthm, amssymb, amsmath}
\usepackage{verbatim, float}
\usepackage{mathrsfs}
\usepackage{graphics}
\usepackage[usenames,dvipsnames]{pstricks}
\usepackage{epsfig}
\usepackage{pst-grad} 
\usepackage{pst-plot} 
\usepackage{cases}
\usepackage{algorithmic}
\usepackage{bm}

\usepackage[top=1.4cm, bottom=1.4cm, left=1.33cm, right=1.33cm]{geometry}
\usepackage{url}

\usepackage{hhline}

\usepackage{etoolbox}
\makeatletter
\patchcmd{\@makecaption}
  {\scshape}
  {}
  {}
  {}
\makeatother

\usepackage[singlelinecheck=false]{caption}
\usepackage{diagbox}

\theoremstyle{plain}
\newtheorem{theorem}{Theorem}
\newtheorem{lemma}{Lemma}

\newtheorem{corollary}{Corollary}
\newtheorem{proposition}{Proposition}

\theoremstyle{definition}
\newtheorem{definition}{Definition}


\newcommand{\C}{{\mathcal C}}


\newcommand{\ba}{{\boldsymbol a}}

\newcommand{\bO}{{\boldsymbol 0}}



\newcommand{\ft}{\mathbb{F}_2}
\newcommand{\fq}{\mathbb{F}_q}

\newcommand{\fqt}{\mathbb{F}_{q^t}}

\newcommand{\fqts}{\mathbb{F}_{q^{t-s}}}







\newcommand{\be}{\beta}



\newcommand{\tr}{\mathsf{Tr}}

\newcommand{\rank}{\mathsf{rank}_B}
\newcommand{\rankt}{\mathsf{rank}_2}
\newcommand{\rankq}{\mathsf{rank}_q}


\newcommand{\define}{\stackrel{\mbox{\tiny $\triangle$}}{=}}

\newcommand{\et}{{\emph{et al.}}}

\newcommand{\Cd}{\mathcal{C}^\perp}
\newcommand{\rsk}{\text{RS}(A,k)}
\newcommand{\grskl}{\text{GRS}(A,k,\boldsymbol{\lambda})}
\newcommand{\grsnkl}{\text{GRS}(A,n-k,\boldsymbol{\lambda})}

\newcommand{\fa}{f(\alpha)}
\newcommand{\fas}{f(\alpha^*)}

\newcommand{\ga}{g(\alpha)}

\newcommand{\gix}{g_i(x)}
\newcommand{\gia}{g_i(\alpha)}
\newcommand{\gias}{g_i(\alpha^*)}

\newcommand{\gox}{g_1(x)}
\newcommand{\goa}{g_1(\alpha)}
\newcommand{\goas}{g_1(\alpha^*)}

\newcommand{\gtx}{g_t(x)}
\newcommand{\gta}{g_t(\alpha)}
\newcommand{\gtas}{g_t(\alpha^*)}

\newcommand{\gsx}{h^*(x)}

\newcommand{\gsas}{h^*(\alpha^*)}

\begin{document}

\title{Optimal Repair Schemes for Some Families of Full-Length Reed-Solomon Codes
\thanks{
H. Dau and O. Milenkovic are with the Coordinated Science Laboratory, University of Illinois at Urbana-Champaign, 1308 W. Main Street, Urbana, IL 61801, USA. Emails: \{hoangdau, milenkov\}@illinois.edu.
}
}
\author{
  \IEEEauthorblockN{
    Hoang~Dau~
    and
    Olgica~Milenkovic
    }
  {\normalsize
    \begin{tabular}{ccc}
      ECE Department, University of Illinois at Urbana-Champaign \\
      Emails: hoangdau@illinois.edu, milenkov@illinois.edu
    \end{tabular}}\vspace{-3ex}
    }
\date{}
\maketitle

\begin{abstract}
Reed-Solomon codes have found many applications in practical storage systems, but were until recently considered unsuitable for distributed
storage applications due to the widely-held belief that they have poor repair bandwidth. The work of Guruswami and Wootters (STOC'16) has shown that one can actually perform bandwidth-efficient linear repair with Reed-Solomon codes: When the codes are over the field $\fqt$ and the number of parities $r \geq q^s$, where $(t-s)$ \emph{divides} $t$, there exists a linear scheme that achieves a repair bandwidth of $(n-1)(t-s)\log_2 q$ bits. We extend this result by showing the existence of such a linear repair scheme for \emph{every} $1 \leq s < t$. 
Moreover, our new schemes are optimal among all linear repair schemes for Reed-Solomon codes when $n = q^t$ and $r = q^s$.
Additionally, we improve the lower bound on the repair bandwidth for Reed-Solomon codes, also established in the work of Guruswami and Wootters. 
\end{abstract}

\section{Introduction}
\label{sec:intro}

The \emph{repair bandwidth} is an important parameter of erasure codes used for assessing their performance in distributed storage applications~\cite{Dimakis_etal2007, Dimakis_etal2010}.
In distributed storage systems, for a chosen finite field $F$, a data vector in $F^k$ is mapped to a codeword vector in $F^n$, whose entries are 
subsequently stored at different storage nodes. When a node fails, the symbol stored at that node is erased (lost). 
A replacement node has to recover the content stored at the failed node by downloading information from the remaining operational nodes. 
The repair bandwidth is the total amount of information that the replacement node
has to download in order to successfully complete the repair process. 

At first glance, Reed-Solomon codes~\cite{ReedSolomon1960,MW_S} do not appear suitable for repair tasks as recovering the content stored at a single failed node requires downloading the whole file, i.e., $k$ \emph{symbols} over $F$. To mitigate this problem, a number of repair-efficient codes such as regenerating codes~\cite{Dimakis_etal2007, Dimakis_etal2010, Dimakis_etal2010_survey} and locally repairable codes~\cite{OggierDatta2011,GopalanHuangSimitciYekhanin2012,PapailiopoulosDimakis2012} were constructed and proposed for practical implementation.
Reed-Solomon codes, along with replication codes, nonetheless remain the most frequently used erasure codes. They are core components of  storage systems such as Google File System II, Quantcast File
System, Yahoo Object Store, Facebook HDFS-RAID, and HDFS-EC.  

Guruswami and Wootters~\cite{GuruswamiWootters2016} recently proposed a bandwidth-efficient repair method Reed-Solomon codes. The key idea behind their method is to download \emph{sub-symbols} rather than \emph{symbols}. More precisely, a single erased symbol is recovered by collecting a sufficient number of its (field) traces, each of which can be constructed from a number of traces of other symbols. 
As all traces belong to a smaller subfield $B = \fq$ of $F = \fqt$
and traces from the same symbol are related, the repair bandwidth can be significant reduced. 
One instance of interest for which their method produces a repair scheme with \emph{optimal} repair bandwidth is when the code has ``full'' length $n = |F| = q^t$ and redundancy $r \define n - k = q^s$, where $(t-s)|t$. 

On the other hand, there exists a rich literature on non-Reed-Solomon maximum-distance separable (MDS) codes (see~\cite[Chp.~10]{MW_S}) with optimal repair bandwidth, referred to as 
minimum-storage regenerating (MSR) codes~\cite{Dimakis_etal2010_survey, WikiEC}. 
Some notable examples of MSR codes include the low-rate product-matrix code~\cite{RashmiShahKumar2011} and all-rates Zigzag code~\cite{TamoWangBruck2011, WangTamoBruck2011}. 
High-rate MSR codes, however, employ \emph{subpacketization} levels that are often exponential in $n$. The problem of finding bandwidth-efficient repair schemes for MDS codes such 
as Reed-Solomon codes was first raised by Dimakis {\et}~\cite{Dimakis_etal2010_survey}. 
The repair process for Reed-Solomon codes was studied in the work of Shanmugam {\et}~\cite{Shanmugam2014}, which proposed interference alignment techniques to repair systematic node failures; and Guruswami and Wootters~\cite{GuruswamiWootters2016}, which introduced the trace collection technique to repair any single erasure for Reed-Solomon codes. Their technique was recently generalized to tackle two and three erasures by Dau {\et}~\cite{DauDuursmaKiahMilenkovic2016}. 
In a related line of work, Ye and Barg~\cite{YeBarg_ISIT2016} constructed Reed-Solomon codes with asymptotically optimal repair bandwidth (among all MDS codes) and exponentially large subpacketization. The work of~\cite{GuruswamiWootters2016} and our recent results~\cite{DauDuursmaKiahMilenkovic2016}, in contrast, use subpacketization as small as $\log_q n$. 

Repair schemes for Reed-Solomon codes with optimal bandwidths have only been constructed for redundancies equal to $r = q^s$, where $(t-s)$ divides $t$. The goal of this work is to determine the optimal repair bandwidths of \emph{all} full-length Reed-Solomon codes for which the number of parities $r$ is allowed to vary from 
$1$ to $n-1$. For this scenario, we settle an important case: we present schemes with repair bandwidth $(n-1)(t-s)\log_2 q$, for $r \geq q^s$ and for \emph{every} $s < t$. This bandwidth is optimal for Reed-Solomon codes whenever $n = q^t$ and $r = q^s$. The key idea behind our approach is to use linearized, instead of trace polynomials, to generate the dual codewords used for repair.  Additionally, we derive a lower bound on the repair bandwidth of Reed-Solomon codes that improves the bound of~\cite{GuruswamiWootters2016}. Theoretical results and numerical evidence suggest that the lower bound matches the optimal repair bandwidths of all full-length Reed-Solomon codes.  

The paper is organized as follows. We provide relevant definitions and introduce the terminology used throughout the paper and then proceed to 
discuss the Guruswami-Wootters repair scheme for Reed-Solomon codes in Section~\ref{sec:GW}. 
The improved lower bound on the repair bandwidth is presented in Section~\ref{sec:lower_bound}. 
The main results of the work are presented in Section~\ref{sec:main}.

\section{Repairing One Erasure in Reed-Solomon Codes}
\label{sec:GW}

We start by introducing relevant definitions and the notation used in all subsequent derivations, and then proceed to 
review the approach proposed by Guruswami and Wootters~\cite{GuruswamiWootters2016} for repairing a single erasure/node failure in Reed-Solomon codes. 

\subsection{Definitions and Notation}
\label{subsec:def_not}

Let $[n]$ denote the set $\{1,2,\ldots,n\}$. Let $B = \fq$ be the finite field of $q$ elements, for some prime power $q$. Let $F = \fqt$ be an extension field of $B$, where $t \geq 1$. 
We refer to the elements of $F$ as \emph{symbols} and the elements of $B$ as \emph{sub-symbols}. The field $F$ may also be viewed as a vector space of dimension $t$ over $B$, i.e. $F \cong B^t$, and hence each symbol in $F$ may be represented 
as a vector of length $t$ over $B$. A linear $[n,k]$ code $\C$ over $F$ is a subspace of $F^n$ of dimension $k$. Each element of a code is referred to as a codeword. 
The dual of a code $\C$, denoted $\Cd$, is the orthogonal complement of $\C$, and has dimension $r = n - k$. 
 
\begin{definition} 
\label{def:RS}
Let $F[x]$ denote the ring of polynomials over $F$. A Reed-Solomon code $\rsk \subseteq F^n$ of dimension $k$ over a finite
field $F$ with evaluation points $A=\{\alpha_1,\alpha_2,\ldots, \alpha_n\}\subseteq F$
is defined as: 
\[
\rsk = \Big\{\big(f(\alpha_1),\ldots,f(\alpha_n)\big) \colon f \in F[x], \deg(f) < k \Big\}. 
\]
\end{definition} 
A \emph{generalized} Reed-Solomon code, $\grskl$, where $\boldsymbol{\lambda} = (\lambda_1,\ldots,\lambda_n)\in F^n$, is defined similarly to a Reed-Solomon code, except that the codeword
corresponding to a polynomial $f$ is defined as $\big( \lambda_1f(\alpha_1),\ldots,\lambda_n f(\alpha_n) \big)$, $\lambda_i \neq 0$ for all $i \in [n]$. 
It is well known that the dual of a Reed-Solomon code $\rsk$, for any $n \leq |F|$, is a generalized Reed-Solomon code $\grsnkl$, for some multiplier vector $\boldsymbol{\lambda}$~(see~\cite[Chp.~10]{MW_S}). Whenever clear from the context, we use $f(x)$ to denote a polynomial of degree at most $k-1$, which corresponds to a codeword of the Reed-Solomon code $\C=\rsk$, and $g(x)$ to denote a polynomial of degree at most $r-1=n-k-1$, which corresponds to a codeword of the dual code $\Cd$. Since
$
\sum_{\alpha \in A}\ga(\lambda_{\alpha}\fa) = 0, 
$
we refer to the polynomial $g(x)$ as a check polynomial for $\C$. 
Note that when $n = |F|$, we have $\lambda_\alpha = 1$ for all $\alpha \in F$. 
In general, as recovering $\fa$ is equivalent to recovering $\lambda_{\alpha}\fa$, to simplify the notation, we omit the factor $\lambda_\alpha$ in our derivations.

\subsection{The Guruswami-Wootters Repair Scheme for One Erasure}
\label{subsec:GW}

Suppose that the polynomial $f(x) \in F[x]$ corresponds to a codeword of the Reed-Solomon code $\C=\rsk$ and that $f(\alpha^*)$ is the erased 
symbol, where $\alpha^* \in A$ is an evaluation point of the code. 

Given that $F$ is an extension field of $B$ of degree $t$, i.e. $F = \fqt$ and $B = \fq$, for some prime power $q$, one may define the field trace of any symbol $\alpha \in F$ according to
$\mathsf{Tr}_{F/B}(\alpha) = \sum_{i = 0}^{t-1} \alpha^{q^i}$. The trace belongs to $B$. When clear from the context, we omit the subscript $F/B$. 
The key points in the repair scheme proposed by Guruswami and Wootters~\cite{GuruswamiWootters2016} can be 
summarized as follows. First, each symbol in $F$ can be recovered from its $t$ independent traces. More precisely, given a basis $u_1,u_2,\ldots,u_t$ of $F$ over $B$, any $\alpha \in F$ can be uniquely determined given the values of $\tr(u_i\,\alpha)$ for $i\in [t]$, i.e. $\alpha = \sum_{i=1}^t\tr(u_i \alpha)u^\perp_i$, where $\{u^\perp_i\}_{i=1}^t$ is the dual (trace-orthogonal) basis of $\{u_i\}_{i=1}^t$ (see, for instance~\cite[Ch.~2, Def.~2.30]{LidlNiederreiter1986}).
Second, when $r \geq q^{t-1}$, the trace function also provide checks that generate repair equations with coefficients that are linearly dependent over $B$, which keeps the repair cost low. 

Note that the checks of $\C$ are precisely those polynomials $g(x) \in F[x]$ that satisfy $\deg(g) \leq r-1$. 
For $r \geq q^{t-1}$, we may define repair checks via the trace function as follows. For each $u \in F$
and $\alpha \in F$, we introduce the polynomial \vspace{-5pt}
\begin{equation} 
\label{eq:p}
g_{u,\alpha}(x) = \tr\big(u(x-\alpha)\big)/(x-\alpha). \vspace{-5pt}
\end{equation} 
By the definition of a trace function, the next lemma follows in a straightforward manner. 

\begin{lemma}[\cite{GuruswamiWootters2016}]
\label{lem:p}
The polynomial $g_{u,\alpha}(x)$ defined in~\eqref{eq:p} satisfies the following properties.\\
\quad (a) $\deg(g_{u,\alpha}) = q^{t-1}-1$;\quad \text{(b)} $g_{u,\alpha}(\alpha) = u$.
\end{lemma}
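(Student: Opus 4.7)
The plan is to prove both parts by computing $g_{u,\alpha}(x)$ explicitly as a polynomial, using the definition of the trace and the fact that we are working in characteristic dividing $q$.

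First I would expand the numerator using $\tr(y) = \sum_{i=0}^{t-1} y^{q^i}$, applied formally with $y = u(x-\alpha) \in F[x]$. Because $q$ is a power of the characteristic of $F$, the Frobenius acts as a ring homomorphism on $F[x]$, so
\[
\tr\big(u(x-\alpha)\big) \;=\; \sum_{i=0}^{t-1} u^{q^i}\,(x-\alpha)^{q^i}.
\]
Every term on the right is divisible by $(x-\alpha)$ (the smallest exponent is $q^0 = 1$), which already confirms that $g_{u,\alpha}(x)$ is a genuine polynomial and not merely a rational function. Dividing through by $(x-\alpha)$ yields the closed form
\[
g_{u,\alpha}(x) \;=\; u \;+\; \sum_{i=1}^{t-1} u^{q^i}\,(x-\alpha)^{q^i - 1},
\]
since the $i=0$ term contributes $(x-\alpha)^{0}\cdot u = u$.

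From this closed form, part (a) is immediate: the highest-degree summand corresponds to $i = t-1$, which contributes the monomial $u^{q^{t-1}}(x-\alpha)^{q^{t-1}-1}$ with leading coefficient $u^{q^{t-1}} \neq 0$ (assuming $u \neq 0$, which is the nontrivial case relevant for the repair scheme). Hence $\deg(g_{u,\alpha}) = q^{t-1}-1$. For part (b), substitute $x = \alpha$: every term in the sum over $i\in\{1,\dots,t-1\}$ contains a factor $(x-\alpha)^{q^i-1}$ with $q^i - 1 \geq 1$, so those terms vanish, leaving $g_{u,\alpha}(\alpha) = u$.

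I do not anticipate a real obstacle here; the proof is essentially a one-line manipulation once the trace is expanded by its defining sum. The only place where mild care is needed is in justifying the step $(u(x-\alpha))^{q^i} = u^{q^i}(x-\alpha)^{q^i}$ inside the polynomial ring — but this is just the Frobenius/freshman-dream identity in characteristic dividing $q$, so the derivation is routine.
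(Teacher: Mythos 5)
Your proof is correct and is exactly the standard computation the paper alludes to when it says the lemma ``follows in a straightforward manner'' from the definition of the trace (the paper omits the details, citing Guruswami--Wootters): expand $\tr(u(x-\alpha))=\sum_{i=0}^{t-1}u^{q^i}(x-\alpha)^{q^i}$ via Frobenius, divide by $(x-\alpha)$, and read off the degree and the constant term at $x=\alpha$. Your remark that the leading coefficient $u^{q^{t-1}}$ is nonzero only for $u\neq 0$ is a fair (and harmless) caveat, since in the repair scheme $u$ ranges over basis elements.
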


By Lemma~\ref{lem:p}~(a), $\deg(g_{u,\alpha}) = q^{t-1}-1 \leq r-1$. Therefore, the polynomial $g_{u,\alpha}(x)$ corresponds to a codeword of $\Cd$ and is a check for $\C$. 
Now let $U = \{u_1,\ldots,u_t\}$ be a basis of $F$ over $B$, and set \vspace{-5pt}
\[
\gix \define g_{u_i,\alpha^*}(x) = \tr\big(u_i(x-\alpha^*)\big)/(x-\alpha^*), \quad i \in [t]. \vspace{-5pt}
\]  
These $t$ polynomials correspond to $t$ codewords of $\Cd$. 
Therefore, we obtain $t$ equations of the form  \vspace{-5pt}
\begin{equation} 
\label{eq:GW_repair}
\gias\fas = - \sum_{\alpha \in A \setminus \{\alpha^*\}} \gia\fa, \quad i \in [t].
\vspace{-5pt} 
\end{equation} 
A key step in the Guruswami-Wootters repair scheme is to apply the trace function to
both sides of \eqref{eq:GW_repair} to obtain $t$ different \emph{repair equations} \vspace{-5pt}
\begin{equation} 
\label{eq:GW_repair_trace}
\tr\big(\gias\fas\big) = - \sum_{\alpha \in A \setminus \{\alpha^*\}} \tr\big(\gia\fa\big), \ i \in [t]. \vspace{-5pt}
\end{equation}
According to Lemma~\ref{lem:p}~(b), $\gias = u_i$, for $i = 1,\ldots,t$. 
Moreover, by the linearity of the trace function, we can rewrite~\eqref{eq:GW_repair_trace} as follows. For $i = 1,\ldots,t$, \vspace{-5pt}
\begin{equation} 
\label{eq:GW_repair_trace_explicit}
\tr\big(u_i\fas\big) = - \sum_{\alpha \in A \setminus \{\alpha^*\}} \tr\big(u_i(\alpha -\alpha^*)\big) \times \tr\Big(\dfrac{f(\alpha)}{\alpha-\alpha^*}\Big).
\vspace{-5pt} 
\end{equation}
The right-hand side sums of the equations~\eqref{eq:GW_repair_trace_explicit} may be computed by downloading the reconstruction trace 
$\tr\Big(\frac{f(\alpha)}{\alpha-\alpha^*}\Big)$ from the node storing $\fa$, for each
$\alpha \in A \setminus \{\alpha^*\}$.
As a consequence, the $t$ independent traces $\tr\big(u_i\fas\big)$, $i = 1,\ldots,t$, of $\fas$ can be determined by downloading one sub-symbol from each of the $n-1$ available nodes, and the erased symbol $\fas$ may subsequently be recovered from its $t$ independent traces. The following theorem summarizes this brief discussion. 

\begin{theorem}[\cite{GuruswamiWootters2016}] 
\label{thm:GW}
When $r \geq q^{t-1}$, there exists a repair scheme for Reed-Solomon codes with a repair bandwidth of $(n-1)\log_2 q$ bits. More generally, if $r \geq q^s,$ where $(t-s)|t$, by setting $B = \fqts$ one can devise a repair scheme with bandwidth $(n-1)(t-s)\log_2 q$ bits, which is optimal
when $n = q^t$ and $r = q^s$. 
\end{theorem}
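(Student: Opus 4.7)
The plan is to break the theorem into three pieces and handle them in sequence: (i) the base case $r \geq q^{t-1}$, which essentially amounts to formalizing the discussion preceding the statement; (ii) the range $r \geq q^s$ with $(t-s)\mid t$, which I would derive from (i) by a subfield-tower reduction; and (iii) the optimality claim at $n = q^t,\ r = q^s$, which I expect to be the main obstacle and would address via a matching lower bound specific to Reed-Solomon codes.

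For (i), I would pick an arbitrary basis $U = \{u_1,\ldots,u_t\}$ of $F = \fqt$ over $B = \fq$ and form the $t$ polynomials $\gix = g_{u_i,\alpha^*}(x)$ from~\eqref{eq:p}. Lemma~\ref{lem:p}~(a) forces $\deg(\gi) = q^{t-1}-1 \leq r-1$, so each $\gi$ is a legitimate check of $\C$, and Lemma~\ref{lem:p}~(b) gives $\gias = u_i$. Taking traces of the resulting parity identities yields~\eqref{eq:GW_repair_trace_explicit}; the key structural feature is that, for a fixed surviving node at $\alpha$, every one of the $t$ right-hand sides involves the \emph{same} sub-symbol $\tr(f(\alpha)/(\alpha-\alpha^*)) \in B$, multiplied by the publicly known coefficients $\tr(u_i(\alpha-\alpha^*))$. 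Hence each helper transmits a single element of $B$, giving total bandwidth $(n-1)\log_2 q$ bits; the replacement node then recovers $\fas$ from the $t$ traces $\tr(u_i\fas)$ by taking the $B$-linear combination prescribed by the trace-dual basis of $U$.

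For (ii), I would carry out a subfield-tower reduction. Setting $t' = t/(t-s)$ (an integer by hypothesis) and $B' = \fqts$, the tower $\fq \subseteq B' \subseteq F$ expresses $F$ as a degree-$t'$ extension of $B'$ with $|B'| = q^{t-s}$. The hypothesis $r \geq q^s$ rewrites as $r \geq |B'|^{t'-1}$, which is precisely the base-case condition of (i) applied to the pair $(F,B')$. Running the scheme from (i) with $B$ replaced by $B'$ thus produces a linear repair scheme in which each surviving node transmits one element of $B'$, for total bandwidth $(n-1)\log_2 |B'| = (n-1)(t-s)\log_2 q$ bits.

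The main obstacle is optimality at $n = q^t$ and $r = q^s$. The generic cut-set bound is in general strictly weaker than the achieved bandwidth in this regime (already at $q=2$, $s=1$, $t=3$, say), so I would not rely on it. Instead I would invoke the Reed-Solomon-specific linear lower bound of Guruswami and Wootters~\cite{GuruswamiWootters2016} (and the sharpening developed in Section~\ref{sec:lower_bound} of the present paper), which shows that any linear repair scheme for a full-length RS code with these parameters must download at least $(n-1)(t-s)\log_2 q$ bits. Matching this against the construction from (ii) closes the argument.
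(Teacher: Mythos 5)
Your proposal is correct and follows essentially the same route as the paper: part (i) is exactly the trace-polynomial argument of Section~\ref{subsec:GW} (Lemma~\ref{lem:p} plus equations~\eqref{eq:GW_repair}--\eqref{eq:GW_repair_trace_explicit}), part (ii) is the same change of base field to $B=\fqts$ that the paper invokes, and part (iii) is the paper's own appeal to the Reed--Solomon-specific linear lower bound (Corollary~\ref{cr:bound1}). No gaps; the only thing worth noting is that your observation that the cut-set bound would not suffice for optimality is accurate and is implicitly why Section~\ref{sec:lower_bound} exists.
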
 

In general, any set of $t$ polynomials $\{\gox,\ldots,\gtx\}$ each of which has degree at most $r-1$ and such that 
$\rankq\{\goas,\ldots,\gtas\} = t$ can be used to repair $\fas$. The repair bandwidth in this case equals $b = \sum_{\alpha \in A \setminus \{\alpha^* \}} b_\alpha$ sub-symbols, 
where $b_\alpha \define \rankq\big(\{\goa,\ldots,\gta\}\big)$. 
Moreover, in this case $b_\alpha$ equals the number of sub-symbols downloaded during the repair process from the node storing $\fa$.  

\section{A Lower Bound on the Repair Bandwidth}
\label{sec:lower_bound}

In order to show that the scheme discussed in Section~\ref{sec:GW} is 
optimal, Guruswami and Wootters~\cite{GuruswamiWootters2016} established a lower bound on the repair bandwidth for Reed-Solomon codes \footnote{The derivations in the bound contained a minor error, which we correct in our derivation.}. We start our exposition by improving their bound. The result of this derivation also suggests the number of sub-symbols that need to be downloaded from each available node using an optimal repair scheme. Consequently, the bound allows one to perform a theoretical/numerical search for optimal repair schemes in a simplified manner.  

\begin{proposition} 
\label{pro:lower_bound}
Any linear repair scheme for Reed-Solomon codes $\rsk$ over the extension field $F = \fqt$ that uses the subfield $B = \fq$ 
requires a bandwidth of at least \vspace{-5pt}
\[
\ell \lfloor b_{\text{AVE}} \rfloor + (n-1-\ell)\lceil b_{\text{AVE}} \rceil \vspace{-5pt}
\] 
sub-symbols over $B$, where $n = |A| \leq |F|$, and where $b_{\text{AVE}}$ and $\ell$
are defined as \vspace{-5pt} 
\[
b_{\text{AVE}} \define \log_q\Big(\frac{(n-1)|F|}{(r-1)(|F|-1)+(n-1)}\Big), \vspace{-5pt}
\]
and $\ell \define n-1$ if $b_{\text{AVE}} \in \mathbb{Z}$, and \vspace{-5pt}
\[
\ell \define \left\lfloor \frac{L - (n-1)q^{-\lceil b_{\text{AVE}} \rceil}}{q^{-\lfloor b_{\text{AVE}} \rfloor} - q^{-\lceil b_{\text{AVE}} \rceil}}\right\rfloor
\vspace{-5pt}
\]
otherwise. Here,  \vspace{-5pt}
\[
L \define \frac{(r-1)(|F|-1)+(n-1)}{|F|}.
\]
\end{proposition}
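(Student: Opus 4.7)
The plan is to recast the lower bound as a combinatorial optimization problem and then solve it via a double-counting argument followed by a convexity/rounding argument. Recall from Section~\ref{sec:GW} that any linear repair scheme over $B$ for $\fas$ is specified by $t$ polynomials $g_1(x),\ldots,g_t(x) \in F[x]$, each of degree at most $r-1$, such that $\rankq\{\goas,\ldots,\gtas\} = t$, and that the total bandwidth equals $\sum_{\alpha \in A \setminus \{\alpha^*\}} b_\alpha$ with $b_\alpha = \rankq\{\goa,\ldots,\gta\}$. Thus it suffices to lower bound this sum over all admissible choices of $g_1,\ldots,g_t$.

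The key step is a double-counting argument on the $B$-linear span $W = \spn\{g_1,\ldots,g_t\} \subseteq F[x]$. Since evaluation at $\alpha^*$ sends $W$ onto a $B$-space of dimension $t$, the polynomials $g_i$ are themselves $B$-linearly independent, so $|W| = q^t$. For every nonzero $g \in W$, $\deg(g) \leq r-1$ and hence $g$ has at most $r-1$ zeros in $A$. On the other hand, for each $\alpha \in A \setminus \{\alpha^*\}$, the kernel of the evaluation map $W \to F$, $g \mapsto g(\alpha)$, has $B$-dimension $t - b_\alpha$ and contributes exactly $q^{t - b_\alpha} - 1$ nonzero zeros. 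Counting incidences $\{(g,\alpha) : g \in W\setminus\{0\},\ g(\alpha)=0\}$ in two ways yields
\[
\sum_{\alpha \in A \setminus \{\alpha^*\}} \big(q^{t-b_\alpha}-1\big) \;\leq\; (q^t - 1)(r-1),
\]
which, after dividing by $q^t = |F|$ and rearranging, is exactly $\sum_{\alpha \neq \alpha^*} q^{-b_\alpha} \leq L$.

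The final step is to minimize $\sum_{\alpha \neq \alpha^*} b_\alpha$ over nonnegative integer tuples $(b_\alpha)$ subject to this constraint. Relaxing to real values, the convexity of $x \mapsto q^{-x}$ and Jensen's inequality force the unconstrained minimizer to be constant, with common value $b_{\text{AVE}}$ determined by $(n-1)q^{-b_{\text{AVE}}} = L$, giving the formula in the statement. To handle integrality, restrict $b_\alpha \in \{\lfloor b_{\text{AVE}}\rfloor, \lceil b_{\text{AVE}}\rceil\}$ (any assignment outside this pair can be improved by a swap/exchange step that preserves the constraint while not increasing the sum, using monotonicity of the constraint in each $b_\alpha$). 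If $\ell$ of the values equal $\lfloor b_{\text{AVE}} \rfloor$ and the remaining $n-1-\ell$ equal $\lceil b_{\text{AVE}}\rceil$, the constraint becomes the affine inequality
\[
\ell \, q^{-\lfloor b_{\text{AVE}} \rfloor} + (n-1-\ell)\, q^{-\lceil b_{\text{AVE}} \rceil} \;\leq\; L,
\]
whose largest admissible integer solution is precisely the $\ell$ in the proposition. The minimum total is then $\ell \lfloor b_{\text{AVE}} \rfloor + (n-1-\ell)\lceil b_{\text{AVE}} \rceil$, as claimed.

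The main obstacle is the integer-optimization step: one must verify rigorously that every optimum is of the two-level form and that picking $\ell$ as large as possible (i.e., the floor in the proposition) is correct. This requires an exchange argument showing that replacing a pair $(b_\alpha,b_\beta) = (a-1,c+1)$ with $a \leq c$ by $(a,c)$ strictly slackens the constraint $\sum q^{-b_\alpha} \leq L$ while not increasing $\sum b_\alpha$. The improvement over \cite{GuruswamiWootters2016} comes from two refinements: using $|W \setminus \{0\}| = q^t - 1$ (rather than $q^t$) on the incidence side, and carrying out the integer rounding exactly instead of applying a pure averaging bound to the real relaxation.
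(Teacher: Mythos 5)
Your proof is correct and follows essentially the same route as the paper's: the same rank/kernel-dimension count leading to the constraint $\sum_{\alpha\neq\alpha^*} q^{-b_\alpha}\le L$, followed by the same two-level integer rounding of the relaxed optimum $b_{\text{AVE}}$ with $\ell$ chosen as large as feasibility allows. The only cosmetic difference is that you obtain the constraint by directly double-counting zero-incidences over the full span $W$, whereas the paper averages over nonzero coefficient vectors and extracts a single worst-case dual codeword bounded by $r-1$ roots; the two arguments are equivalent.
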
 
\begin{proof} 
The first part of the proof proceeds along the same lines as the proof of~\cite[Thm.~6]{GuruswamiWootters2016}.
But once the optimization problem is solved to arrive at a \emph{fractional} lower bound, rather than allowing
the number of sub-symbols downloaded from each available node to be real-valued, we perform a rounding procedure which leads to an improved \emph{integral} lower bound.  
 
Fix any $\alpha^* \in A$ and consider an arbitrary exact linear repair scheme of Reed-Solomon codes for the node storing $\fas$ that uses $b$ sub-symbols from $B$. By~\cite[Thm.~4]{GuruswamiWootters2016}, there is a set of $t$ polynomials $\gox,\ldots,\gtx$ such that $\rankq\big(\{\goas,\ldots,\gtas\}\big) = t$ and $\rankq\big(\{\goa,\ldots,\gta\}\big) = b_\alpha$, for all $\alpha \in A \setminus \{\alpha^*\}$, where
$b = \sum_{\alpha \in A \setminus \{\alpha^*\}}b_\alpha$.  
For each $\alpha \in A$, let \vspace{-5pt}
\[
S_\alpha \define \{\ba = (a_1,\ldots,a_t) \in B^t \colon \sum_{i=1}^t a_i\gia = 0\}.
\]
Since $\rank(\{\goa,\ldots,\gta\}) = b_\alpha$, we deduce that 
$\dim_B(S_\alpha) = t - b_\alpha$. Averaging over all \emph{nonzero} $\ba \in B^t$, 
we have \vspace{-5pt}
\begin{multline}
\label{eq:average}
\frac{1}{|F|-1} \sum_{\ba \in B^t \setminus \{\bO\}} 
|\{\alpha \in A \setminus \{\alpha^*\} \colon \ba \in S_\alpha\}|\\
= \frac{1}{|F|-1} \sum_{\alpha \in A \setminus \{\alpha^*\}} 
|\{\ba \in B^t \setminus \{\bO\} \colon \ba \in S_\alpha\}|\\
= \frac{1}{|F|-1} \sum_{\alpha \in A \setminus \{\alpha^*\}} 
(q^{t-b_\alpha} - 1) =: E.
\end{multline}
(Note that we added the correction term ``$-1$'' in the last sum of \eqref{eq:average} that was missing in the original proof of~\cite[Thm.~6]{GuruswamiWootters2016}.)
Therefore, there exists some $\ba^* \in B^t \setminus \{\bO\}$ so that $|\{\alpha \colon
\ba^* \in S_\alpha\}| \geq E$. Let $\gsx \define \sum_{i=1}^t a^*_i \gix$. 
By the choice of $\ba^*$, $\gsx$ vanishes on at least $E$ points of $A \setminus \{\alpha^*\}$. Also, since $\ba^* \neq \bO$, $\gsas = \sum_{i=1}^t a^*_i \gias \neq 0$. Therefore, $\gsx$ corresponds to a nonzero codeword in the dual code $\Cd$ and can hence have at most $r - 1$ roots. Thus,
\[
\frac{1}{|F|-1} \sum_{\alpha \in A \setminus \{\alpha^*\}} 
(q^{t-b_\alpha} - 1) = E \leq r-1,
\] 
or equivalently, 
\begin{equation} 
\label{eq:feasible}
\sum_{\alpha \in A \setminus \{\alpha^*\}} q^{-b_\alpha}
\leq \big((r-1)(|F|-1)+(n-1)\big) / |F| =: L.
\end{equation} 
Let \vspace{-10pt}
\begin{equation}
\label{eq:optimization} 
b_{\min} \define \min_{b_\alpha \in \{0,1,\ldots,t\}} \sum_{\alpha \in A \setminus \{\alpha^*\}}b_\alpha,\quad\ \text{subject to } \eqref{eq:feasible}.  
\end{equation} 
Then, any feasible repair scheme has to have $b \geq b_{\min}$. To solve the optimization problem~(\ref{eq:optimization}), the authors of~\cite[Thm.~6]{GuruswamiWootters2016} relaxed the condition that $b_{\alpha}$ are integer-valued and arrived at a lower bound that reads as $(n-1)b_{\text{AVE}}$, where $b_{\text{AVE}} \define \log_q\big((n-1)/L\big)$.
But one can still solve~\eqref{eq:optimization} for $b_{\alpha} \in \{0,1,\ldots,t\}$ and arrive at a closed form expression for $b_{\min}$.
To see how to accomplish this analysis, we first let $\{b_1,\ldots,b_{n-1}\}$ refer to $\{b_\alpha \colon \alpha \in A \setminus \{\alpha^*\}\}$. We then 
claim that  \vspace{-5pt}
\[
b^*_1 = \cdots = b^*_\ell = \lfloor b_{\text{AVE}} \rfloor, 
b^*_{\ell+1} = \cdots = b^*_{n-1} = \lceil b_{\text{AVE}} \rceil,
\]
where $\ell$ is the largest integer satisfying $\sum_{i=1}^{n-1}q^{-b^*_i} \leq L$, is an optimal solution of \eqref{eq:optimization}. To this end, if $(b_1,\ldots,b_{n-1})$ is an optimal solution of \eqref{eq:optimization}, and $b_i - b_j \geq 2$ for some $i$ and $j$, we may decrease $b_i$ by one and increase $b_j$ by one, and retain an 
optimal solution. Repeating this ``balancing'' procedure for as many times as possible, we obtain
an optimal solution for which $|b_i - b_j| \leq 1$, $i,j\in [n-1]$.
If $\min_i b_i < \lfloor b_{\text{AVE}} \rfloor$ then $(b_1,\ldots,b_{n-1})$ cannot be a feasible solution. Therefore, $\min_i b_i \geq \lfloor b_{\text{AVE}} \rfloor$. 
Because of the way $\ell$ was chosen, we always have 
$\sum_{i=1}^{n-1}b_i \geq \sum_{i=1}^{n-1}b^*_i$, 
which establishes the optimality of $(b^*_1,\ldots,b^*_{n-1})$.
Finally, $\ell$ may be easily computed as follows. 
If $b_{\text{AVE}} \in \mathbb{Z}$ then $\ell = n-1$, otherwise \vspace{-5pt}
\[
\ell = \left\lfloor \frac{L - (n-1)q^{-\lceil b_{\text{AVE}} \rceil}}{q^{-\lfloor b_{\text{AVE}} \rfloor} - q^{-\lceil b_{\text{AVE}} \rceil}}\right\rfloor. \qedhere
\]
\end{proof} 

\begin{corollary} 
\label{cr:bound1}
When $n = |F| = q^t$ and $r = q^s$, for some $s \in [t]$, 
any linear repair scheme over the subfield $B = \fq$ 
of a Reed-Solomon code $\rsk$ defined over $F$ requires a bandwidth of at least
$(n-1)(t-s)$ sub-symbols over $B$. 
\end{corollary}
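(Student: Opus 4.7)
The plan is to derive the corollary as a direct specialization of Proposition~\ref{pro:lower_bound}, showing that the two parameters $b_{\text{AVE}}$ and $\ell$ take particularly clean values when $n=|F|=q^t$ and $r=q^s$. The main idea is that, under these hypotheses, the quantity $b_{\text{AVE}}$ turns out to be exactly the integer $t-s$, which collapses the piecewise bound of Proposition~\ref{pro:lower_bound} into a single, clean expression.

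More explicitly, first I would substitute $n=|F|=q^t$ and $r=q^s$ into
\[
L = \frac{(r-1)(|F|-1)+(n-1)}{|F|}
\]
and observe that $(q^s-1)(q^t-1) + (q^t-1) = q^s(q^t-1)$, which yields $L = q^s(q^t-1)/q^t$. Then, plugging into the definition of $b_{\text{AVE}}$, I would compute
\[
b_{\text{AVE}} = \log_q\!\Big(\frac{(n-1)|F|}{(r-1)(|F|-1)+(n-1)}\Big) = \log_q\!\Big(\frac{(q^t-1)\,q^t}{q^s(q^t-1)}\Big) = t-s.
\]
In particular, $b_{\text{AVE}}\in\mathbb{Z}$.

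Next, I would invoke the case $b_{\text{AVE}}\in\mathbb{Z}$ of Proposition~\ref{pro:lower_bound}, which sets $\ell = n-1$. Consequently, $\lfloor b_{\text{AVE}}\rfloor = \lceil b_{\text{AVE}}\rceil = t-s$, and the lower bound reduces to
\[
\ell\lfloor b_{\text{AVE}} \rfloor + (n-1-\ell)\lceil b_{\text{AVE}} \rceil = (n-1)(t-s) + 0,
\]
which is the desired conclusion.

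Essentially no obstacle arises: once Proposition~\ref{pro:lower_bound} is granted, the corollary is a two-line arithmetic simplification. The only thing worth emphasizing is the clean cancellation $(q^s-1)(q^t-1)+(q^t-1)=q^s(q^t-1)$ that makes $b_{\text{AVE}}$ an integer, which is precisely what ensures the lower bound is tight rather than merely fractional.
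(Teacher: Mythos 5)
Your proposal is correct and follows exactly the paper's own route: specialize Proposition~\ref{pro:lower_bound}, verify that $b_{\text{AVE}} = t-s \in \mathbb{Z}$ (via the cancellation $(q^s-1)(q^t-1)+(q^t-1)=q^s(q^t-1)$), conclude $\ell = n-1$, and read off the bound $(n-1)(t-s)$. The paper simply asserts these two values without displaying the arithmetic, so your write-up is the same argument with the computation made explicit.
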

\begin{proof} 
In this case, $b_{\text{AVE}} = t-s \in \mathbb{Z}$ and $\ell = n - 1$, which according to Proposition~\ref{pro:lower_bound} give the desired bound.  
\end{proof}  

Note also that the integral bound of Corollary~\ref{cr:bound1} and the Guruswami-Wootters fractional bound coincide. 
However, in many other cases, the integral bound strictly outperforms the fractional
bound. Consider as an example the Facebook RS(14,10) code defined over $\text{GF}(256)$. If the code is repaired over the subfield $\text{GF}(16))$, 
the fractional bound results in $28$ downloaded bits, while our integral bound asserts that a download of at least $44$ bits is needed. 
It is also apparent that the fractional bound does not depend on the subfield that the code is repaired over, while the integral bound does.  
In general, the bigger the order of $B$, the larger the gap between the two bounds. 

Also, one may assume that if a repair scheme that achieves the bound of Corollary~\ref{cr:bound1} were to exist, it would require that the 
replacement node download $t-s$ sub-symbols from each available
node. This intuition has been extremely useful in our quest for optimal repair schemes for Reed-Solomon codes.

\section{Optimal Repair Schemes for Full-Length Reed-Solomon Codes with $r = q^s$}
\label{sec:main}

In this section, we construct repair schemes for a Reed-Solomon code $\rsk$ defined over $F = \fqt$, where the code length $n = |A| \leq |F|$ and the number of parities $r \define n - k \geq q^s$, for every $s < t$.
These schemes are optimal when $n = |F| = q^t$ and $r = q^s$.  
We first settle the case $q = 2$ and $s = 1$, and then proceed to tackle the general case when $q \geq 2$ and $s < t$. 


\subsection{Repair Schemes for Reed-Solomon Codes with Two Parities}

Suppose that $q = 2$ and $r = n - k \geq 2$. 
We can use the constant and linear polynomials as check polynomials for repairing a codeword symbol $\fas$ (In fact, only linear polynomials are used).
The main task is to select the roots and multipliers of the codewords properly. 

\textbf{Construction I.}
Assume that $\alpha^* \in A$ and $\fas$ is erased. Select a subset $\{z_1,\ldots,z_t\}
\subseteq F$ such that $\{\alpha^* - z_1,\ldots,\alpha^* - z_t\}$ forms a basis of
$F$ over $\ft$. For $i \in [t],$ set $\be_i = \alpha^* - z_i$ and $\gix = \be_i(x-z_i)$.
  
\begin{lemma} 
\label{lem:ind}
The set $\{\goas,\ldots,\gtas\}$, where the $\gix$ are chosen according to Construction~I, has rank $t$ over $\ft$. 
\end{lemma}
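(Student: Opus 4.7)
The plan is to first simplify $\gias$ in closed form and then invoke the Frobenius endomorphism. By the definition of Construction~I, we have
\[
\gias = \be_i(\alpha^* - z_i) = \be_i \cdot \be_i = \be_i^2,
\]
so the claim reduces to showing that $\{\be_1^2,\ldots,\be_t^2\}$ is $\ft$-linearly independent, given that $\{\be_1,\ldots,\be_t\}$ is a basis of $F$ over $\ft$ by construction.

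The key observation is that the squaring map $\ph : F \to F$, $x \mapsto x^2$, is the Frobenius automorphism of $F = \ftt$ over $\ft$, and in particular is an $\ft$-linear bijection. Hence it sends a basis to a basis, and $\{\be_1^2,\ldots,\be_t^2\} = \{\ph(\be_1),\ldots,\ph(\be_t)\}$ remains a basis of $F$ over $\ft$, which is stronger than what is needed.

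If one prefers a direct verification rather than citing Frobenius: suppose $\sum_{i=1}^t a_i \be_i^2 = 0$ with $a_i \in \ft$. Since $a_i = a_i^2$ for $a_i \in \ft$, and since squaring distributes over sums in characteristic $2$, we obtain
\[
0 = \sum_{i=1}^t a_i^2 \be_i^2 = \Bigl( \sum_{i=1}^t a_i \be_i \Bigr)^2,
\]
which forces $\sum_i a_i \be_i = 0$, and then $\ft$-linear independence of $\{\be_i\}$ forces $a_i = 0$ for all $i$. This completes the argument.

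I do not anticipate a serious obstacle here: the whole content of the lemma is that Construction~I was engineered so that evaluation at $\alpha^*$ returns $\be_i^2$, and the only nontrivial fact being used is the characteristic-$2$ identity $(x+y)^2 = x^2 + y^2$. The reason I would spell out the Frobenius argument explicitly is mainly to foreshadow the general case $q \geq 2$, $s < t$, where the analogous construction presumably produces evaluations of the form $\be_i^{q^s}$ (or more generally images under a $q$-linearized polynomial), and the same ``Frobenius sends a basis to a basis'' principle should carry over.
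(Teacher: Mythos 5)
Your proof is correct and matches the paper's argument: both compute $\gias = \be_i^2$ and then use the characteristic-two identity $\sum_i a_i\be_i^2 = \bigl(\sum_i a_i\be_i\bigr)^2$ (equivalently, that Frobenius is an $\ft$-linear bijection) to transfer independence from $\{\be_i\}$ to $\{\be_i^2\}$. Your anticipation of the generalization to $\be_i^{q^s}$ is also exactly what happens in the paper's Construction~II.
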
 
\begin{proof} 
We have $(\goas,\ldots,\gtas) = (\be_1^2,\ldots,\be_t^2)$. 
As we are working over a field of characteristic two, it holds that
$\sum_{i=1}^t a_i\be_i^2 = \big(\sum_{i=1}^t a_i \be_i\big)^2$, for every $a_i \in \ft$. 
Therefore, \vspace{-5pt}
\[
\sum_{i=1}^t a_i\be_i^2 = 0 \Longrightarrow \sum_{i=1}^t a_i \be_i = 0, \vspace{-5pt}
\] 
which implies that $a_1=\cdots=a_t = 0$, as $\{\be_1,\ldots,\be_t\}$ is a basis of $F$
over $\ft$. Therefore, $\rankt\big(\{\goas,\ldots,\gtas\}\big) = t$.  
\end{proof} 

\begin{lemma} 
\label{lem:dep}
If the $\gix$ are chosen according to Construction~I, then $\rankt\big(\{\goa,\ldots,\gta\}\big) \leq t-1$, for every $\alpha \in A \setminus \{\alpha^*\}$. 
\end{lemma}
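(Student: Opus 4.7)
The plan is to show that $\sum_{i=1}^t a_i g_i(\alpha) = 0$ has a nonzero solution $(a_1,\ldots,a_t) \in \ft^t$ for every $\alpha \in A \setminus \{\alpha^*\}$, by exploiting the structure of $g_i$ together with characteristic-two arithmetic. First, I would rewrite each $g_i(\alpha)$ so that the basis element $\be_i$ appears explicitly. Since $z_i = \alpha^* - \be_i = \alpha^* + \be_i$ in characteristic two, we have
\[
\alpha - z_i = (\alpha - \alpha^*) + \be_i,
\]
so that $g_i(\alpha) = \be_i(\alpha - \alpha^*) + \be_i^2$.

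Next, for arbitrary $a_1,\ldots,a_t \in \ft$, I would form the linear combination and use the Frobenius-type identity over $\ft$ (i.e., $a_i^2 = a_i$ and the fact that $(\cdot)^2$ is additive in characteristic two) to obtain
\[
\sum_{i=1}^t a_i g_i(\alpha) = (\alpha - \alpha^*) \gam + \gam^2, \quad \text{where } \gam \define \sum_{i=1}^t a_i \be_i.
\]
This factors as $\gam\bigl(\gam + (\alpha - \alpha^*)\bigr)$, so the combination vanishes whenever $\gam = \alpha - \alpha^*$.

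The main step is then to invoke the basis hypothesis from Construction~I: since $\{\be_1,\ldots,\be_t\}$ is a basis of $F$ over $\ft$, and since $\alpha - \alpha^* \neq 0$ (because $\alpha \neq \alpha^*$), there is a \emph{unique} and \emph{nonzero} tuple $(a_1,\ldots,a_t) \in \ft^t$ with $\sum_i a_i \be_i = \alpha - \alpha^*$. This produces an explicit nontrivial $\ft$-linear dependence among $\goa,\ldots,\gta$, establishing $\rankt\bigl(\{\goa,\ldots,\gta\}\bigr) \leq t-1$.

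I do not expect a serious obstacle here; the only thing to be careful about is the application of $\sum_i a_i \be_i^2 = \bigl(\sum_i a_i \be_i\bigr)^2$, which relies on both characteristic two and $a_i \in \ft$, and which is the same identity already used in the proof of Lemma~\ref{lem:ind}. Everything else is an identification of the correct coefficients via the basis property.
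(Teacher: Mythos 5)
Your proof is correct and follows essentially the same route as the paper's: both expand $\gia = \be_i\big(\be_i \pm (\alpha-\alpha^*)\big)$, choose the coefficients $a_i \in \ft$ so that $\sum_i a_i\be_i = \alpha - \alpha^*$ (nonzero since $\alpha \neq \alpha^*$), and apply the characteristic-two identity $\sum_i a_i\be_i^2 = \big(\sum_i a_i\be_i\big)^2$ to exhibit the vanishing linear combination. The only cosmetic difference is that the paper writes the target as $\alpha^* - \alpha$, which coincides with your $\alpha - \alpha^*$ in characteristic two.
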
 
\begin{proof} 
We aim to show that the set $\{\goa,\ldots,\gta\}$ is dependent over $\ft$, for every
$\alpha \in A \setminus \{\alpha^*\}$. For $i \in [t]$, we have
\[
\gia = \be_i(\alpha - z_i) = \be_i\big((\alpha^*-z_i) - (\alpha^*-\alpha)\big)
= \be_i\big(\be_i - (\alpha^*-\alpha)\big).
\]
As $\{\be_1,\ldots,\be_t\}$ is a basis of $F$ over $\ft$, we may write  \vspace{-5pt}
\[
\alpha^*-\alpha = \sum_{i = 1}^t a_i \be_i, 
\]
where $a_i \in \ft$, $i \in [t]$, are not all zero. We now have
\vspace{-5pt}
\[
\begin{split}
\sum_{i=1}^t a_i \gia &= \sum_{i=1}^t a_i \be_i\big(\be_i - (\alpha^*-\alpha)\big)\\
&= \sum_{i=1}^t a_i \be_i^2 - \Big(\sum_{i=1}^t a_i \be_i\Big)(\alpha^* - \alpha)\\
&= \sum_{i=1}^t a_i \be_i^2 - \Big(\sum_{i=1}^t a_i \be_i\Big)\Big(\sum_{i=1}^t a_i \be_i\Big) = 0.
\end{split}
\] 
Therefore, the set $\{\goa,\ldots,\gta\}$ is dependent over $\ft$. 
\end{proof} 

\begin{theorem} 
\label{thm:r=2}
Let $n \leq |F| = 2^t$ and $r = n - k \geq 2$. 
The set of check polynomials $\{\gox,\ldots,\gtx\}$ defined in Construction~I can be used to repair a codeword symbol $\fas$ of a Reed-Solomon code $\rsk$ with a repair bandwidth of at most $(n-1)(t-1)$ bits. 
Moreover, when $n = |F| = 2^t$ and $r = 2$, this repair bandwidth is optimal.  
\end{theorem}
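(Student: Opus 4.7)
The plan is to combine the two lemmas with the general framework introduced at the end of Section~\ref{subsec:GW}, and then invoke Corollary~\ref{cr:bound1} for the optimality claim.

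First I would verify that each $g_i(x) = \beta_i(x - z_i)$ from Construction~I is a valid check polynomial for $\C = \rsk$. Since $\deg(g_i) = 1 \leq r - 1$ (using the hypothesis $r \geq 2$), each $g_i(x)$ corresponds to a codeword of the dual code $\Cd$, hence yields a valid repair equation $g_i(\alpha^*)f(\alpha^*) = -\sum_{\alpha \in A \setminus \{\alpha^*\}} g_i(\alpha) f(\alpha)$ as in~\eqref{eq:GW_repair}. Applying the trace $\tr_{F/\ft}$ to both sides of each such equation, as in~\eqref{eq:GW_repair_trace}, produces $t$ $\ft$-linear repair equations in the traces of $f(\alpha^*)$ and of the available symbols. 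By Lemma~\ref{lem:ind}, $\rankt\{g_1(\alpha^*),\ldots,g_t(\alpha^*)\} = t$, so the left-hand sides furnish $t$ $\ft$-linearly independent traces of $f(\alpha^*)$; these determine $f(\alpha^*)$ uniquely via the dual-basis reconstruction recalled in Section~\ref{subsec:GW}.

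Next I would bound the bandwidth. For each $\alpha \in A \setminus \{\alpha^*\}$, the number of sub-symbols that the replacement node must download from the node storing $f(\alpha)$ equals $b_\alpha = \rankt\{g_1(\alpha),\ldots,g_t(\alpha)\}$, because any set of $b_\alpha$ basis elements among $\{g_i(\alpha)\}$ suffices to compute the $\ft$-linear combinations $\tr(g_i(\alpha)f(\alpha))$ appearing on the right-hand sides. By Lemma~\ref{lem:dep}, $b_\alpha \leq t-1$ for every such $\alpha$. Summing over the $n-1$ available nodes gives a total bandwidth of at most $(n-1)(t-1)$ bits, as claimed.

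For the optimality assertion, I would simply apply Corollary~\ref{cr:bound1} with $q = 2$ and $s = 1$, which asserts that any linear repair scheme over $B = \ft$ for $\rsk$ with $n = 2^t$ and $r = 2$ needs at least $(n-1)(t-1)$ sub-symbols. This matches the upper bound achieved by Construction~I.

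I expect no serious obstacle: the two lemmas already carry essentially all the technical weight (independence at $\alpha^*$ plus a uniform $\ft$-linear dependency at every other evaluation point), and the general framework of Section~\ref{subsec:GW} converts these rank statements directly into bandwidth statements. The only point that requires a brief justification is the implicit step that a rank bound $b_\alpha$ on $\{g_i(\alpha)\}_{i=1}^{t}$ translates into downloading exactly $b_\alpha$ sub-symbols from node $\alpha$; this is because after choosing a maximal $\ft$-independent subfamily, the remaining $g_i(\alpha)f(\alpha)$ terms needed on the right of~\eqref{eq:GW_repair_trace} are $\ft$-linear combinations of the chosen ones, and the trace is $\ft$-linear.
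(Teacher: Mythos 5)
Your proposal is correct and follows essentially the same route as the paper: the paper's proof is simply ``the first claim follows from Lemma~\ref{lem:ind} and Lemma~\ref{lem:dep}, and the second from Corollary~\ref{cr:bound1} with $B = \ft$,'' relying on the general framework at the end of Section~\ref{subsec:GW} that you have spelled out explicitly. Your added detail (checking $\deg(g_i)=1\le r-1$ and justifying that the rank $b_\alpha$ equals the number of downloaded sub-symbols) is exactly the content the paper leaves implicit.
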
 
\begin{proof} 
The first claim follows from Lemma~\ref{lem:ind} and Lemma~\ref{lem:dep}. 
The second claim holds due to Corollary~\ref{cr:bound1}, with $B = \ft$. 
\end{proof} 
An example illustrating Construction~I is given in Fig.~\ref{fig:ConstructionI}. 
\begin{table}[htb]
\centering
\begin{tabular}{|l||c|c|c|c|c|c|c|c|}
\hline
$A$ & $\boldsymbol{\alpha^*} = \boldsymbol{0}$ & $1$ & $\xi$ & $\xi^2$ & $\xi^3$ & $\xi^4$ & $\xi^5$ & $\xi^6$\\
\hhline{|=||=|=|=|=|=|=|=|=|}
$g_1 = x-1$ & $\mathbf{1}$ & $\cdot$ & $\xi^3$ & $\xi^6$ & $\xi$ & $\xi^5$ & $\xi^4$ & $\xi^2$\\
\hline
$g_2 = \xi(x-\xi)$ & $\mathbf{\boldsymbol{\xi}^2}$ & $\xi^4$ & $\cdot$ & $\xi^5$ & $\xi$ & $\xi^3$ & $1$ & $\xi^6$\\
\hline
$g_3 = \xi^2(x-\xi^2)$ & $\mathbf{\boldsymbol{\xi}^4}$ & $\xi$ & $\xi^6$ & $\cdot$ & $1$ & $\xi^3$ & $\xi^5$ & $\xi^2$\\
\hhline{|=||=|=|=|=|=|=|=|=|}
$\rankt(\cdot)$ & $\mathbf{3}$ & $2$ & $2$& $2$& $2$& $2$& $2$& $2$\\
\hline
\end{tabular}
\caption{The list of dual codewords generated according to Construction~I, which may be used to repair the first codeword symbol $f(0)$ of an $[8,6]$ Reed-Solomon code over $\mathbb{F}_{2^3}$. 
We let $\xi$ be a primitive element of the field, where $1 + \xi + \xi^3 = 0$. 
The column corresponding to the evaluation point $\alpha^* = 0$ has rank three,
which means that the corresponding dual codewords can be used to repair $f(0)$. All other columns have rank two over $\ft$, which means that this scheme has a repair bandwidth of $14 = 7*2$ bits, which is optimal. It suffices for the replacement node to download two bits from each available node, for instance, $\tr(\xi^4f(\xi^5))$ and $\tr(f(\xi^5))$ from the node storing $f(\xi^5)$, or $\tr(\xi^2f(\xi^6))$ and $\tr(\xi^6f(\xi^6))$ from the node storing $f(\xi^6)$.}
\label{fig:ConstructionI} \vspace{-15pt}
\end{table}

\subsection{Repair Scheme for Reed-Solomon Codes with $q^s$ Parities}

Suppose that $r = n - k \geq q^s$. 
We can use polynomials of degrees at most $r-1 = q^s-1$ as check polynomials for repairing a codeword symbol $\fas$, but we choose only polynomials of highest degree possible, i.e. of degree $q^s-1$. We generalize Construction~I by using the \emph{inverses} of the nonzero elements of a subspace of dimension $s$ to generate the check polynomials.

\textbf{Construction II.} 
Let $\xi$ be a primitive element of $F = \fqt$ and $W = \{0, w_1,\ldots,w_{q^s-1}\}$ an $\fq$-subspace of dimension $s$ in $\fqt$. 
For all $i \in [t]$, we choose
\[
\gix = \be_i\prod_{j=1}^{q^s-1} \Big(x - \big(\alpha^* - w_j^{-1} \be_i\big)\Big),
\]
where $\{\be_1,\ldots,\be_t\}$ is an arbitrary $\fq$-basis of $\fqt$. 
Note that Construction~I corresponds to the case $q = 2$, $s = 1$, and $w_1 = 1$. 
We set 
\begin{equation} 
\label{eq:wi}
M \define \prod_{j=1}^{q^s-1} w_j^{-1} \neq 0. 
\end{equation} 

\begin{lemma} 
\label{lem:qs_high}
The set $\{\goas,\ldots,\gtas\}$, where $\gix$ is described in Construction~II, has rank $t$ over $\fq$, for every $s < t$. 
\end{lemma}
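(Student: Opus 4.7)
The plan is to evaluate $g_i(\alpha^*)$ explicitly and then use a Frobenius-linearity argument that generalizes the proof of Lemma~\ref{lem:ind}.

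First, I would plug $\alpha^*$ into the formula for $g_i(x)$. The factor $x - (\alpha^* - w_j^{-1}\beta_i)$ becomes $w_j^{-1}\beta_i$ at $x = \alpha^*$, and since each term contributes a factor of $\beta_i$, the product telescopes into
\[
g_i(\alpha^*) \;=\; \beta_i \cdot \prod_{j=1}^{q^s-1} \bigl(w_j^{-1}\beta_i\bigr) \;=\; M\,\beta_i^{q^s},
\]
where $M$ is the nonzero constant defined in~\eqref{eq:wi}. Since $M \neq 0$, the rank of $\{g_1(\alpha^*),\ldots,g_t(\alpha^*)\}$ over $\fq$ equals the rank of $\{\beta_1^{q^s},\ldots,\beta_t^{q^s}\}$ over $\fq$.

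Next, I would establish $\fq$-linear independence of $\{\beta_i^{q^s}\}_{i=1}^t$ via the Frobenius trick. Suppose $\sum_{i=1}^{t} a_i\,\beta_i^{q^s} = 0$ for some $a_i \in \fq$. The key observation is that every element $a \in \fq$ satisfies $a^{q^s} = a$ (because $a^q = a$, applied $s$ times). Therefore $a_i = a_i^{q^s}$, and using the fact that the $q^s$-power map is a field homomorphism of $\fqt$, we obtain
\[
0 \;=\; \sum_{i=1}^{t} a_i\,\beta_i^{q^s} \;=\; \sum_{i=1}^{t} a_i^{q^s}\,\beta_i^{q^s} \;=\; \Bigl(\sum_{i=1}^{t} a_i \beta_i\Bigr)^{q^s}.
\]
Extracting the $q^s$-th root (the Frobenius is injective) yields $\sum_i a_i \beta_i = 0$, and since $\{\beta_1,\ldots,\beta_t\}$ is an $\fq$-basis of $\fqt$, all $a_i$ vanish. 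This gives $\rankq\{\beta_1^{q^s},\ldots,\beta_t^{q^s}\} = t$, hence $\rankq\{g_1(\alpha^*),\ldots,g_t(\alpha^*)\} = t$.

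No serious obstacle is anticipated: the evaluation at $\alpha^*$ is a direct computation designed to produce the clean expression $M\beta_i^{q^s}$, and the independence argument is exactly the $q^s$-analogue of the characteristic-two step used in Lemma~\ref{lem:ind}, relying only on the fact that $\fq$-elements are fixed by the $q^s$-Frobenius and that this Frobenius is an $\fq$-linear field automorphism of $\fqt$.
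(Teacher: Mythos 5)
Your proposal is correct and follows essentially the same route as the paper: evaluate $g_i(\alpha^*)$ to get $M\beta_i^{q^s}$ and observe that $\{\beta_1^{q^s},\ldots,\beta_t^{q^s}\}$ is still an $\fq$-basis. The only difference is that the paper declares the latter fact ``obvious,'' whereas you supply the Frobenius justification explicitly (which is the right justification, and also the natural generalization of the characteristic-two argument in Lemma~\ref{lem:ind}).
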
 
\begin{proof}
For $i \in [t]$ we have
\[
\gias = \be_i \prod_{j=1}^{q^s-1} \Big(\alpha^* - \big(\alpha^* - w_j^{-1} \be_i\big)\Big)
= M\be_i^{q^s}.
\]
It is obvious that $\{\be_1^{q^s},\ldots,\be_t^{q^s}\}$ is also an $\fq$-basis of $\fqt$. Since $M$ is a nonzero constant, we deduce that 
\[
\rankq\big(\{\goas,\ldots,\gtas\}\big) = \rankq\big(\{\be_1^{q^s},\ldots,\be_t^{q^s}\}\big) = t. \qedhere
\]
\end{proof} 

The proof of the following lemma can be found in~\cite[p.~4]{Goss}.

\vskip 5pt
\begin{lemma} 
\label{lem:linearized}
Suppose that $s < t$ and that $W$ is an $s$-dimensional $\fq$-subspace of $\fqt$. Let $L_W(x) = \prod_{w \in W}(x-w)$. Then $L_W$ is an $\fq$-linear mapping from $\fqt$
to itself, with kernel $W$ and image $L_W(\fqt)$ of dimension $t-s$
over $\fq$. \vspace{-5pt}
\end{lemma}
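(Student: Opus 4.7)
The plan is to first establish the $\fq$-linearity of $L_W$, after which the kernel and image claims follow from elementary dimension considerations. The main work is in showing linearity; the rest is a short counting and rank-nullity argument.

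The crux is to show that $L_W(x)$ is a \emph{$q$-polynomial}, i.e., of the form $\sum_{i=0}^{s} c_i x^{q^i}$, since any such polynomial induces an $\fq$-linear map on $\fqt$ (Frobenius is additive over characteristic-$p$ fields and fixes $\fq$ pointwise). I would prove this by induction on $s = \dim_{\fq} W$. The base case $s=0$ is trivial: $W = \{0\}$ and $L_W(x) = x$. For the inductive step, pick $v \in \fqt \setminus W'$ for some $(s-1)$-dimensional subspace $W' \subset W$ with $W = W' \oplus \fq v$, so that
\[
L_W(x) = \prod_{c \in \fq} L_{W'}(x - cv).
\]
By the inductive hypothesis $L_{W'}$ is $\fq$-linear, hence $L_{W'}(x - cv) = L_{W'}(x) - c\,L_{W'}(v)$. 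Setting $d := L_{W'}(v) = \prod_{w' \in W'}(v - w') \neq 0$ (since $v \notin W'$), the elementary identity $\prod_{c \in \fq}(y - cd) = y^q - d^{q-1}y$, verified by substituting $y = dz$ to reduce to $\prod_{c \in \fq}(z-c) = z^q - z$, yields
\[
L_W(x) = L_{W'}(x)^q - d^{q-1}\, L_{W'}(x).
\]
Since the $q$-th power and $\fq$-scalar multiples of a $q$-polynomial are again $q$-polynomials, $L_W$ is a $q$-polynomial and therefore $\fq$-linear.

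With linearity established, the kernel and image are immediate. By construction every $w \in W$ is a root of $L_W$, so $W \subseteq \ker L_W$; conversely, $L_W$ has degree $|W| = q^s$, hence at most $q^s$ roots in $\fqt$, forcing $\ker L_W = W$. The rank-nullity theorem applied to the $\fq$-linear map $L_W \colon \fqt \to \fqt$ then gives $\dim_{\fq} L_W(\fqt) = t - s$, completing the proof. The main obstacle is the inductive linearity step, and specifically the subsidiary identity $\prod_{c \in \fq}(y - cd) = y^q - d^{q-1}y$; everything else is a routine counting argument once that is in place.
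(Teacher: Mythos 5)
Your proof is correct and complete. Note, however, that the paper does not actually prove Lemma~\ref{lem:linearized} at all --- it simply cites the result (the standard fact that a subspace polynomial is an additive, i.e.\ $q$-, polynomial) to Goss's book. What you have written is essentially the canonical textbook proof of that cited fact: induction on $s=\dim_{\fq}W$ via the decomposition $W=W'\oplus\fq v$, the identity $\prod_{c\in\fq}(y-cd)=y^{q}-d^{q-1}y$ (correctly reduced to $\prod_{c\in\fq}(z-c)=z^{q}-z$ by the substitution $y=dz$, with $d=L_{W'}(v)\neq 0$ because $v\notin W'$), yielding the recursion $L_W(x)=L_{W'}(x)^{q}-d^{q-1}L_{W'}(x)$, which preserves the class of $q$-polynomials. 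The kernel identification ($W\subseteq\ker L_W$ plus the degree bound $\deg L_W=q^{s}=|W|$ forcing equality) and the rank--nullity step are exactly right. So your submission supplies a self-contained argument where the paper defers to a reference; there is nothing to correct, and the only thing the citation ``buys'' over your version is brevity.
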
 

\begin{lemma} 
\label{lem:qs_low}
The set $\{\goa,\ldots,\gta\},$ with the polynomials $\gix$ defined in Construction~II, 
has rank at most $t-s$ over $\fq$, for every $\alpha \in A \setminus \{\alpha^*\}$ and $s < t$.
\end{lemma}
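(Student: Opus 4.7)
The plan is to evaluate $g_i(\alpha)$ in closed form, recognize it as (a constant multiple of) a linearized polynomial applied to $\be_i$, and then invoke Lemma~\ref{lem:linearized} to bound the $\fq$-dimension of the span.

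First I would introduce the shorthand $\gam := \alpha^* - \alpha$, which is nonzero since $\alpha \neq \alpha^*$. Substituting into Construction~II and pulling out the factors $w_j^{-1}$, I obtain
\[
\gia
= \be_i \prod_{j=1}^{q^s-1}\big(w_j^{-1}\be_i - \gam\big)
= M\,\be_i \prod_{j=1}^{q^s-1}\big(\be_i - w_j\gam\big),
\]
where $M$ is the nonzero constant defined in~\eqref{eq:wi}.

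Next I would identify the product on the right with a linearized polynomial. Since $\gam \neq 0$, the set $\gam W = \{0, w_1\gam, \ldots, w_{q^s-1}\gam\}$ is again an $s$-dimensional $\fq$-subspace of $\fqt$. Using its associated linearized polynomial $L_{\gam W}(x) = \prod_{w \in \gam W}(x - w)$, a direct computation gives
\[
L_{\gam W}(\be_i) = \be_i \prod_{j=1}^{q^s-1}\big(\be_i - w_j\gam\big),
\]
so that $\gia = M \cdot L_{\gam W}(\be_i)$ for each $i \in [t]$.

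Finally, by Lemma~\ref{lem:linearized}, $L_{\gam W}$ is an $\fq$-linear map from $\fqt$ to itself whose image has $\fq$-dimension exactly $t-s$. Hence every $\gia$ lies in the $\fq$-subspace $M \cdot L_{\gam W}(\fqt)$ of dimension $t-s$, and therefore $\rankq\big(\{\goa,\ldots,\gta\}\big) \le t-s$, as desired. I do not anticipate a serious obstacle here: the only step that requires any care is rewriting the product so that the linearized polynomial becomes visible, after which the conclusion follows immediately from Lemma~\ref{lem:linearized}.
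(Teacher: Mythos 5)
Your proof is correct and follows essentially the same route as the paper: both rewrite $\gia$ as a nonzero constant times a linearized polynomial evaluated at a point and then apply Lemma~\ref{lem:linearized}. The only (cosmetic) difference is that you absorb $\gam$ into the subspace, writing $\gia = M\,L_{\gam W}(\be_i)$, whereas the paper keeps the fixed subspace $W$ and writes $\gia = M\gam^{q^s}L_W(\be_i\gam^{-1})$; the two identities coincide via $L_{\gam W}(x) = \gam^{q^s}L_W(x/\gam)$.
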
 
\begin{proof} 
Set $\gamma = \alpha^* - \alpha \neq 0$, we have
\begin{equation} 
\label{eq:pia}
\begin{split}
\gia &= \be_i\prod_{j=1}^{q^s-1} \Big(\alpha - \big(\alpha^* - w_j^{-1} \be_i\big)\Big)\\
&= \be_i\prod_{j=1}^{q^s-1} \Big(w_j^{-1} \be_i - \gamma\Big)\\
&= \be_i\gamma^{q^s-1}\prod_{j=1}^{q^s-1} \Big(w_j^{-1} \be_i\gamma^{-1} - 1\Big)\\
&= \gamma^{q^s}\big(\be_i\gamma^{-1}\big)\prod_{j=1}^{q^s-1} \Big(w_j^{-1} \big(\be_i\gamma^{-1}\big) - 1\Big). 
\end{split} 
\end{equation} 
If we set $v(x) = x\prod_{j=1}^{q^s-1}(w_j^{-1} x - 1)$, then \vspace{-5pt}
\begin{multline} 
\label{eq:vx} 
v(x) = x\Big(\prod_{j=1}^{q^s-1}w_j^{-1}\Big)\Big(\prod_{j=1}^{q^s-1}(x - w_j)\Big)\\ 
= M\Big(x\prod_{j=1}^{q^s-1}(x - w_j)\Big) = ML_W(x), 
\end{multline} 
due to \eqref{eq:wi}, where $L_W(x) \define \prod_{w \in W}(x-w)$.  
From \eqref{eq:pia} and \eqref{eq:vx}, we obtain
$\gia = M\gamma^{q^s} L_W\big(\be_i\gamma^{-1}\big)$.  
Therefore, 
\[
\begin{split} 
\rankq\big(\{\gia \colon i \in [t]\}\big) 
&= \rankq\big(\left\{L_W\big(\big(\be_i\gamma^{-1}\big)\big) \colon i \in [t]\right\}\big)\\
&\leq \dim_q\big(L_W(\fqt)\big) = t-s,
\end{split} 
\]
where the last inequality follows from Lemma~\ref{lem:linearized}. 
\end{proof} 

\begin{theorem} 
\label{thm:r=qs2} 
The statements of Theorem~\ref{thm:r=qs} hold for the set of check polynomials defined in Construction~II. 
\end{theorem}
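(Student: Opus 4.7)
The plan is to follow exactly the template of the proof of Theorem~\ref{thm:r=2}, replacing the two rank lemmas used there (Lemma~\ref{lem:ind} and Lemma~\ref{lem:dep}) by their general-$q$, general-$s$ counterparts (Lemma~\ref{lem:qs_high} and Lemma~\ref{lem:qs_low}), and invoking Corollary~\ref{cr:bound1} for the optimality part. Since all the heavy lifting has already been done, this should be a short packaging argument.

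First I would verify that each polynomial $\gix$ from Construction~II is actually a check for $\C=\rsk$. By direct inspection of the product in Construction~II, $\deg(\gix) = q^s - 1 \le r - 1$, so $\gix$ corresponds to a codeword of $\Cd$ and can be used to set up a repair equation of the form~\eqref{eq:GW_repair}. Next, by Lemma~\ref{lem:qs_high}, the evaluations $\goas,\ldots,\gtas$ have full rank $t$ over $\fq$; applying the trace function to~\eqref{eq:GW_repair} as in the Guruswami--Wootters scheme, this guarantees that $\fas$ can be reconstructed from the $t$ resulting repair equations.

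To bound the bandwidth, recall from the end of Section~\ref{sec:GW} that the number of sub-symbols to be downloaded from the node storing $\fa$ is $b_\alpha = \rankq\{\goa,\ldots,\gta\}$. Lemma~\ref{lem:qs_low} gives $b_\alpha \le t - s$ for every $\alpha \in A \setminus \{\alpha^*\}$, so the total repair bandwidth is at most $(n-1)(t-s)$ sub-symbols over $\fq$, i.e. $(n-1)(t-s)\log_2 q$ bits, establishing the first statement of Theorem~\ref{thm:r=qs}.

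For the optimality claim, assume $n = |F| = q^t$ and $r = q^s$. Corollary~\ref{cr:bound1} asserts that any linear repair scheme over the subfield $B = \fq$ has bandwidth at least $(n-1)(t-s)$ sub-symbols, which matches the upper bound obtained from Construction~II. Hence the scheme is optimal among linear schemes in this regime. I do not anticipate any real obstacle: the only nontrivial content is already sealed inside Lemmas~\ref{lem:qs_high}--\ref{lem:qs_low} and Corollary~\ref{cr:bound1}, and one just has to chain them together exactly as in the proof of Theorem~\ref{thm:r=2}.
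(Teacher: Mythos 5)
Your proposal is correct and matches the paper's (implicit) argument exactly: the paper states Theorem~\ref{thm:r=qs2} without a written proof precisely because, as you observe, it follows by chaining Lemma~\ref{lem:qs_high} and Lemma~\ref{lem:qs_low} with Corollary~\ref{cr:bound1}, mirroring the proofs of Theorems~\ref{thm:r=2} and~\ref{thm:r=qs}. Your added check that $\deg(\gix) = q^s - 1 \le r-1$, so that each $\gix$ is a valid check polynomial, is a correct and worthwhile detail.
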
 

The next construction achieves the same repair bandwidth as Construction~II, via the nonzero elements of a subspace of dimension $s$. In fact, this construction generalizes the construction of~\cite{GuruswamiWootters2016} by using a \emph{linearized polynomial} with distinct roots (see, for instance~\cite[Ch.~4, \S9]{MW_S}) instead of the field trace polynomial. Note that the field trace $\tr_{\fqt/\fqts}$ is well defined only when $\fqts$ is a \emph{subfield} of $\fqt$, i.e. when $(t-s) | t$.
In contrast, a linearized polynomial of degree $q^s$ with no repeated roots, which maps $\fqt$ to a \emph{subspace} of dimension $t-s$, always exists for every $1 \leq s < t$.  

\textbf{Construction III.}
Let $\{u_1,\ldots,u_t\}$ be an $\fq$-basis of $\fqt$ and let $W$ be an arbitrary $\fq$-subspace of dimension $s$ of $\fqt$. Set $L_W(x) \define \prod_{w \in W}(x-w)$ and $g_i(x) \define L_W\big(u_i(x-\alpha^*)\big)/(x-\alpha^*)$, for every $i \in [t]$. Note that since $\deg(g_i) = q^s-1 \leq r-1$, the polynomials $g_i(x)$ are checks for the Reed-Solomon code $\C$. 

\begin{lemma} 
\label{lem:linearized_high}
The set $\{\goas,\ldots,\gtas\}$, where $\gix$ is described in Construction~III, has rank $t$ over $\fq$, for every $s < t$. 
\end{lemma}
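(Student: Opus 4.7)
The plan is to mirror the computation used for the earlier lemma, but with the substitution that makes the linearized structure of $L_W$ transparent. First I would factor out the $0 \in W$ factor from $L_W$, writing
\[
L_W(y) \;=\; \prod_{w \in W}(y-w) \;=\; y \cdot \prod_{w \in W \setminus \{0\}}(y-w) \;=\; y \cdot P(y),
\]
where $P(y) \define \prod_{w \in W \setminus \{0\}}(y-w)$ is a polynomial of degree $q^s - 1$ with a nonzero constant term. This step is crucial because it is the mechanism by which the troublesome denominator $(x-\alpha^*)$ in the definition of $g_i(x)$ will cancel.

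Next I would substitute $y = u_i(x-\alpha^*)$ and simplify. By the factorization above,
\[
g_i(x) \;=\; \frac{L_W\bigl(u_i(x-\alpha^*)\bigr)}{x-\alpha^*} \;=\; \frac{u_i(x-\alpha^*)\, P\bigl(u_i(x-\alpha^*)\bigr)}{x-\alpha^*} \;=\; u_i \, P\bigl(u_i(x-\alpha^*)\bigr).
\]
Now I would evaluate at $x = \alpha^*$, which makes the argument of $P$ vanish, giving
\[
g_i(\alpha^*) \;=\; u_i \cdot P(0) \;=\; u_i \cdot \prod_{w \in W \setminus \{0\}}(-w) \;=\; c \cdot u_i,
\]
where $c \define (-1)^{q^s-1}\prod_{w \in W \setminus \{0\}} w$ is a nonzero element of $\fqt$ that does not depend on $i$.

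Finally, since $c \neq 0$ and $\{u_1,\ldots,u_t\}$ is an $\fq$-basis of $\fqt$, the set $\{c u_1, \ldots, c u_t\} = \{g_1(\alpha^*),\ldots,g_t(\alpha^*)\}$ is also an $\fq$-basis of $\fqt$, and hence has rank $t$ over $\fq$. I do not anticipate any real obstacle here: the entire argument hinges on recognizing that $0 \in W$ forces the factor $(x-\alpha^*)$ in the numerator, and the rest is a one-line evaluation. The only minor care needed is to note that $P(0) \neq 0$ because $0$ is not a root of $P$ (all its roots are the nonzero elements of $W$), which justifies that $c \neq 0$.
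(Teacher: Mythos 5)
Your proof is correct and follows essentially the same route as the paper: the paper writes $L_W(x) = \tau x + x^2 h(x)$ with $\tau = (-1)^{q^s-1}\prod_{w \in W\setminus\{0\}} w \neq 0$ and concludes $g_i(\alpha^*) = \tau u_i$, which is exactly your $c\,u_i$ with $c = P(0) = \tau$. The only cosmetic difference is that you factor $L_W(y) = y\,P(y)$ and evaluate $P$ at $0$, while the paper isolates the linear coefficient of $L_W$; the key observation (the $0 \in W$ factor cancels the denominator) and the conclusion are identical.
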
  
\begin{proof} 
Let $W^* \define W \setminus \{0\}$. Then $L_W(x) = x\prod_{w \in W^*}(x-w) = \tau x + x^2h(x)$, where $\tau = (-1)^{q^s-1} \prod_{w \in W^*}w \neq 0$ and $h(x)$ is a polynomial of degree $q^s-2$. 
Therefore, $g_i(x) = \tau u_i + u_i^2(x-\alpha^*)h\big(u_i(x-\alpha^*)\big)$ and hence, $g_i(\alpha^*) = \tau u_i$, for every $i \in [t]$. As $\tau \neq 0$ and $\{u_1,\ldots,u_t\}$ is an $\fq$-basis of $\fqt$, it follows that the set $\{\goas,\ldots,\gtas\}$ has rank $t$ over $\fq$.
\end{proof} 
\vskip 5pt

\begin{lemma} 
\label{lem:linearized_low}
The set $\{\goa,\ldots,\gta\}$ with the polynomials $\gix$ defined in Construction~III, 
has rank at most $t-s$ over $\fq$, for every $\alpha \in A \setminus \{\alpha^*\}$ and $1 \leq s < t$.
\end{lemma}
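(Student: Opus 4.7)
The plan is to reduce the claim, via $\mathbb{F}_q$-linearity of $L_W$, to the fact (already established in Lemma~\ref{lem:linearized}) that the image $L_W(\fqt)$ has dimension $t-s$ over $\fq$. The only substantive verification is that scaling by the nonzero element $\gamma = \alpha - \alpha^*$ preserves $\fq$-rank.

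First I would fix $\alpha \in A \setminus \{\alpha^*\}$ and set $\gamma = \alpha - \alpha^* \in \fqt^\times$. Plugging into the definition from Construction~III gives
\[
\gia = \frac{L_W\big(u_i(\alpha - \alpha^*)\big)}{\alpha - \alpha^*} = \gamma^{-1}\, L_W(u_i\gamma), \qquad i \in [t].
\]
Since $\gamma^{-1}$ is a nonzero scalar in $\fqt$, we have
\[
\rankq\big(\{\goa,\ldots,\gta\}\big) = \rankq\big(\{L_W(u_1\gamma),\ldots,L_W(u_t\gamma)\}\big).
\]

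Next I would observe that the map $y \mapsto \gamma y$ is an $\fq$-linear automorphism of $\fqt$, so $\{u_1\gamma,\ldots,u_t\gamma\}$ is again an $\fq$-basis of $\fqt$. Applying the $\fq$-linear map $L_W$ (by Lemma~\ref{lem:linearized}) sends this basis onto a spanning set of the image $L_W(\fqt)$, which has $\fq$-dimension $t-s$ by the same lemma. Therefore
\[
\rankq\big(\{L_W(u_i\gamma) \colon i \in [t]\}\big) \leq \dim_q\big(L_W(\fqt)\big) = t-s,
\]
completing the proof.

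There is essentially no obstacle here: the heavy lifting is done by Lemma~\ref{lem:linearized}, and the only ingredient specific to Construction~III is recognizing that $u_i\gamma$ still forms a basis (which is immediate from $\gamma \neq 0$). The argument is in fact cleaner than its counterpart for Construction~II (Lemma~\ref{lem:qs_low}), since here $L_W$ is applied directly rather than extracted from a product rearrangement.
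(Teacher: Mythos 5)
Your proposal is correct and follows essentially the same route as the paper's proof: evaluate $g_i(\alpha) = \frac{1}{\alpha-\alpha^*}L_W\big(u_i(\alpha-\alpha^*)\big)$, discard the nonzero scalar factor, and bound the rank by $\dim_q\big(L_W(\fqt)\big) = t-s$ via Lemma~\ref{lem:linearized}. The only cosmetic difference is that you note $\{u_i\gamma\}$ is again a basis, which is not needed --- membership of the images in $L_W(\fqt)$ already suffices.
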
 
\begin{proof} 
For $\alpha \neq \alpha^*$, set $\gamma_i = u_i(\alpha - \alpha^*)$, we have $g_i(\alpha) = \frac{1}{\alpha - \alpha^*}L_W\big(\gamma_i\big)$. Therefore \vspace{-5pt}
\[
\begin{split}
\rankq\big(\{\goa,\ldots,\gta\}\big) &= \rankq\big(\left\{L_W(\gamma_1),\ldots, L_W(\gamma_t)\right\}\big)\\
&\leq \dim_q\big(L_W(\fqt)\big) = t - s, \vspace{-5pt} 
\end{split}
\]
according to Lemma~\ref{lem:linearized}. 
\end{proof} 

\begin{theorem} 
\label{thm:r=qs}
Let $n \leq |F| = q^t$ and $r = n - k \geq q^s$, for some $s < t$. 
The set of check polynomials $\{\gox,\ldots,\gtx\}$ defined in Construction~III
can be used to repair a codeword symbol $\fas$ of a Reed-Solomon code $\rsk$ with a repair bandwidth of at most $(n-1)(t-s)\log_2q$ bits. 
Moreover, when $n = |F| = q^t$ and $r = q^s$, this repair bandwidth is optimal.  
\end{theorem}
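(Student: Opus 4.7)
The proof will be a direct combination of the two lemmas just established (Lemma~\ref{lem:linearized_high} and Lemma~\ref{lem:linearized_low}) with the general bandwidth-counting recipe recalled at the end of Section~\ref{sec:GW}, plus Corollary~\ref{cr:bound1} for the matching lower bound. Structurally it will mirror the proof of Theorem~\ref{thm:r=2}, so no new ingredients should be needed.

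For the first (upper bound) claim, I will first observe that each $g_i(x)$ from Construction~III has degree $q^s - 1 \le r - 1$, so each $g_i$ corresponds to a codeword of $\Cd$ and is a legitimate check polynomial for $\C = \rsk$. By Lemma~\ref{lem:linearized_high}, the set $\{\goas,\ldots,\gtas\}$ has rank $t$ over $\fq$, which means the $t$ equations $\gias f(\alpha^*) = -\sum_{\alpha \neq \alpha^*} \gia f(\alpha)$ can be combined (after taking $\fq$-linear combinations and using a trace-dual basis argument as in Section~\ref{subsec:GW}) to recover $f(\alpha^*)$ from its $t$ independent $\fq$-coordinates. By Lemma~\ref{lem:linearized_low}, for every $\alpha \in A \setminus \{\alpha^*\}$, the set $\{\goa,\ldots,\gta\}$ has rank at most $t-s$ over $\fq$, so the replacement node needs at most $t-s$ sub-symbols from $\fq$ (hence at most $(t-s)\log_2 q$ bits) from the node storing $f(\alpha)$. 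Summing over the $n-1$ available nodes yields a total repair bandwidth of at most $(n-1)(t-s)\log_2 q$ bits.

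For the second (optimality) claim, I will simply invoke Corollary~\ref{cr:bound1} with $B = \fq$: when $n = |F| = q^t$ and $r = q^s$, every linear repair scheme over $\fq$ requires at least $(n-1)(t-s)$ sub-symbols of $B$, i.e.\ at least $(n-1)(t-s)\log_2 q$ bits. Since Construction~III attains this bound, it is optimal.

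I do not anticipate a real obstacle here, since both constituent lemmas and the lower bound are already in hand; the only care required is to state explicitly that $\deg(g_i) \le r-1$ so that the $g_i$ are valid checks, and to point out that the rank condition at $\alpha^*$ together with the recipe of Section~\ref{subsec:GW} indeed permits reconstruction of $f(\alpha^*)$ from the $(n-1)(t-s)$ sub-symbols downloaded from the surviving nodes.
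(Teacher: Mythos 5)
Your proposal is correct and follows exactly the paper's own (very short) proof: the upper bound is obtained by combining Lemma~\ref{lem:linearized_high} and Lemma~\ref{lem:linearized_low} with the bandwidth-counting recipe of Section~\ref{sec:GW}, and optimality follows from Corollary~\ref{cr:bound1}. The extra details you supply (checking $\deg(g_i) \le r-1$ and the reconstruction of $f(\alpha^*)$ from its $t$ independent traces) are accurate elaborations of what the paper leaves implicit.
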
 
\begin{proof} 
The first statement follows from Lemma~\ref{lem:linearized_high} and Lemma~\ref{lem:linearized_low}. 
The second statement holds due to Corollary~\ref{cr:bound1}. 
\end{proof} 

\section*{Acknowledgment}

This work has been supported in part by the NSF grant CCF 1526875 and the Center for Science of Information under the grant NSF 0939370.
The authors thank Iwan Duursma for helpful discussions.  

\bibliographystyle{IEEEtran}
\bibliography{FullLength}

\end{document}